\documentclass{llncs}


\input{xy}
\xyoption{all}

\usepackage[italian,english]{babel}
\usepackage[T1]{fontenc}
\usepackage{ucs} 
\usepackage{graphicx,color}
\usepackage{amsmath,amssymb,amscd,amsfonts,mathtools}
\usepackage{enumerate}
\setcounter{tocdepth}{3}

\def\draft{1}   

\ifnum\draft=1 
    \def\ShowAuthNotes{1}
\else
    \def\ShowAuthNotes{0}
\fi

\ifnum\ShowAuthNotes=1
\newcommand{\authnote}[2]{{ \textbf{\textcolor{red}{[#1's Note: \em #2 ]} }}}
\else
\newcommand{\authnote}[2]{}
\fi


%
  {\left\lbrace\begin{array}{@{}l@{}}}%
  {\end{array}\right.}

\newcommand{\PL}{P_{\mathsf{L}}}

\newcommand{\PR}{P_{\mathsf{R}}}


\newcommand{\Refresh}{\mathsf{Refresh}}
\newcommand{\ExpRefresh}{\mathsf{ExpRefresh}}

\newcommand{\ReconstructRefresh}{\mathsf{ReconstructRefresh}}
\newcommand{\ExpReconstructRefresh}{\mathsf{ExpReconstructRefresh}}

\newcommand{\inp}[1]{\langle #1 \rangle}

\newcommand{\Enc}{\mathrm{Enc}}
\newcommand{\Dec}{\mathrm{Dec}}

\newcommand{\cpO}{\cO'}




\newcommand{\leak}{\lambda}
\newcommand{\bin}{\{0,1\}}

\newcommand{\draw}{\leftarrow}
\newcommand{\view}{\mathsf{view}}
\newcommand{\viewL}{\view_{\mathsf{L}}}
\newcommand{\viewR}{\view_{\mathsf{R}}}

\newcommand{\out}[1]{}

\newcommand{\negl}{negl}

\newcommand{\inter}{\rightleftarrows}

\newcommand{\cA}{{\cal A}}

\newcommand{\cL}{{\cal L}}
\newcommand{\cO}{{\cal O}}

\newcommand{\cR}{{\cal R}}



\newcommand{\FF}{\mathbb{F}}

\newcommand{\NN}{\mathbb{N}}

 \def\eqd{\,{\buildrel d \over =}\,}


\usepackage{cite}

\setcounter{page}{0}

\title{Efficient Refreshing Protocol for Leakage-Resilient Storage Based on the Inner-Product Extractor
\thanks{This work was supported by the {\em
      WELCOME/2010-4/2} grant founded within the framework of the {\em
      EU Innovative Economy (National Cohesion Strategy) Operational
      Programme}.}}

\author{Marcin Andrychowicz\inst{1}}

\institute{University of Warsaw}




\newcommand{\samp}{\leftarrow}

\newcommand{\Omegaretr}{\Omega}

\newcommand{\abs}[1]{\left|#1\right|}

\newcommand{\cM}{{\cal M}}

\newcommand{\dist}{\Delta}

\usepackage[table]{xcolor}

\newcommand{\FZ}{(\FF \setminus \{0\})}

\begin{document}

\maketitle

\begin{abstract}

A recent trend in cryptography is to protect data and computation against various side-channel attacks.
Dziembowski and Faust (TCC 2012) have proposed a general way to protect arbitrary circuits against any continual leakage
assuming that:
(i) the memory is divided into the parts, which leaks independently
(ii) the leakage in each observation is bounded
(iii) the circuit has an access to a leak-free component, which samples random orthogonal vectors.
The pivotal element of their construction is a protocol for \emph{refreshing} the so-called Leakage-Resilient Storage (LRS).

In this note, we present a more efficient and simpler protocol for refreshing LRS under the same assumptions.
Our solution needs $O(n)$ operations to fully refresh the secret (in comparison to $\Omega(n^2)$ for a protocol of Dziembowski and Faust), where $n$ is a security parameter that describes the maximal amount of leakage in each invocation of the refreshing procedure.
\end{abstract}

\section{Introduction}
A leakage-resilient cryptography has been intensively studied in the recent years
(cf.~for instance \cite{MR,DP08,FaustKPR10,GoldwasserR10,DBLP:conf/focs/DodisHLW10,DBLP:conf/focs/BrakerskiKKV10,Alison1,DF11,DF12,GR12}).
This note is based on a work by Dziembowski and Faust \cite{DF12}.
It follows the assumptions, construction and notation from the mentioned work.
We briefly review the settings, for the complete description of the model we refer the reader to \cite{DF12}.

We first start with the definition of the {\em Leakage-Resilient Storage (LRS)} \cite{DDV10},
which is a randomized encoding scheme
$(\Enc : \cM \rightarrow \cL \times \cR, \Dec: \cL \times \cR \rightarrow \cM)$,
resilient to leakage in the following sense. 
Let $m \in \cM$ be a message, and let $(l,r) :=\Enc(m)$.
Then, an adversary that learns some partial information $f(l)$ about $l$ and (independently) $g(r)$ about $r$ should gain no information about the encoded message $m$.  
The idea is to keep $l$ and $r$ on the different memory parts, which leak independently.
We will model that setting assuming that they are kept be different \emph{parties}, which
can perform computation and exchange messages.

More precisely (citing verbatim \cite{DF12}), for some $c, \ell, \leak \in \NN$ let $M_1,\ldots,M_\ell \in \bin^c$ denote the contents of the memory
parts, then we define a {\em $\leak$-leakage game}
played between an adaptive adversary $\cA$, called a {\em
  $\leak$-limited leakage adversary}, and a {\em leakage oracle}
$\Omegaretr(M_1,\ldots,M_\ell)$ as follows. For some $m \in \NN$, the adversary $\cA$ can adaptively issue a sequence  $\{(x_i,f_i)\}_{i=1}^m$ of requests to the oracle $\Omegaretr(M_1,\ldots,M_\ell)$, where $x_i \in \{1,\ldots,\ell\}$ and $f_i:\bin^c \rightarrow \bin^{\leak_i}$ with $\leak_i \leq \leak$. To each such a query the oracle replies with $f_i(M_{x_i})$ and we say that in this case the adversary $\cA$ {\em retrieved the value $f_i(M_{x_i})$ from $M_{x_i}$}. The only
restriction is that in total the adversary does not
retrieve more than  $\leak$ bits from
each memory part.
In the following, let $(\cA \inter (M_1,\ldots,M_\ell))$ be
the output of $\cA$ at the end of this game.

An LRS $\Phi$ is said to be $(\leak,\epsilon)$-secure, if for any $S,S' \in \cM$ and any $\leak$-limited adversary $\cA$, we have
$\dist(\cA \inter (L,R); \cA \inter (L',R')) \leq \epsilon$,
 where $(L,R) \draw \Enc(S)$ and $(L',R') \draw \Enc(S')$, for any two secrets $S, S' \in \cM$.

A variant of LRS $\Phi^{n}_\FF$ introduced in \cite{DF11} is based of the inner-product extractor.
A secret $S \in \FF$ (where $\FF$ is an arbitrary finite field) is encoded using two random vectors $L,R \in \FF^n$, such that
$S = \inp{L,R}$.
In this note we only allow the encodings such that $L,R \in \FZ^n$.
Moreover, we will assume that $F$ is fairly large in comparison to $n$, that is $|\FF| \ge 4n$.
Dziembowski and Faust \cite{DF12} showed the following lemma.
 \begin{lemma}\label{cor:par}
    Suppose $\abs{\FF} = \Omega(n)$.  Then, LRS $\Phi^{n}_\FF$ is $(0.49
    \cdot \log_2 \abs{\FF^n} -1, \negl(n))$-secure, for some negligible function $\negl$.
  \end{lemma}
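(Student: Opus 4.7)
The plan is to derive the statement as an immediate consequence of the fact that the inner product on independent sources is a strong two-source extractor, combined with the standard bound on how bounded leakage decreases average min-entropy. I would reduce the LRS indistinguishability game to the extractor property in three conceptual steps.

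First, I would fix any $\leak$-limited adversary $\cA$ interacting with $(L,R) \draw \Enc(S)$ and summarise its view as a pair of leakage transcripts $(Z_L, Z_R)$, where $Z_L$ is a deterministic function of $L$ (and of the adaptive choices $\cA$ makes in response to $Z_R$) and vice versa. Since $\cA$ retrieves at most $\leak$ bits from each side and the marginals of $L$ and $R$ are uniform on $\FZ^n$, the standard chain rule for average min-entropy yields $\AvMinEnt(L \mid Z_L, Z_R) \ge \log_2 \abs{\FZ^n} - \leak$ and the symmetric bound for $R$, while $L$ and $R$ remain conditionally independent given the transcript pair.

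Second, I would invoke the inner-product two-source extractor lemma already used in~\cite{DF11}, descending from the Chor--Goldreich / Rao-style analysis: if $X$ and $Y$ are independent $\FZ^n$-valued sources whose average min-entropies sum to strictly more than $n \log_2 \abs{\FF} + 2\log_2(1/\epsilon) + O(\log\abs{\FF})$, then $(Z_L, Z_R, \inp{X, Y})$ is $\epsilon$-close to $(Z_L, Z_R, U_\FF)$. A Bayes-rule manipulation turns this strong-extractor guarantee into LRS indistinguishability: the conditional distribution of $(Z_L, Z_R)$ given $\inp{L,R} = S$ is within $\abs{\FF} \cdot \epsilon$ of its unconditional marginal, so for any $S, S' \in \FF$ the encodings viewed by $\cA$ are within $2 \abs{\FF} \epsilon$ in statistical distance.

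Finally, I would plug in the parameters. Writing $k := \log_2 \abs{\FF}$, one has $\log_2 \abs{\FZ^n} = n \log_2(\abs{\FF}-1) \ge nk - O(n/\abs{\FF}) = nk - O(1)$ under $\abs{\FF} \ge 4n$. Each source therefore retains min-entropy at least $nk - \leak - O(1) \ge 0.51\, nk - O(1)$, so the sum exceeds $(n+1)k$ by $\Omega(nk)$, yielding extractor error $\epsilon = 2^{-\Omega(nk)}$. Multiplying by $\abs{\FF}$ preserves negligibility in $n$ since $nk \ge n \log_2(4n) = \omega(\log n)$. The main technical obstacle I anticipate is handling the restriction to $\FZ^n$ together with the conditioning on $\inp{L,R}=S$ cleanly: the encoding distribution is not the marginal of a product distribution, but the distortion costs only $O(1)$ bits of min-entropy because $\abs{\FZ}/\abs{\FF} = 1 - 1/\abs{\FF}$ is $1-o(1)$ under the assumption $\abs{\FF} \ge 4n$. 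A secondary subtlety is the adaptiveness of $\cA$'s queries, but a per-transcript analysis suffices since the average min-entropy bound is invariant under fixing the leakage values on both sides and tracking the adversary's deterministic branching inside.
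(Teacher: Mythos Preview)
The paper does not contain a proof of this lemma at all: it is quoted verbatim as a result of Dziembowski and Faust~\cite{DF12} (building on the inner-product LRS of~\cite{DF11}) and used as background for the refreshing protocol, which is the paper's actual contribution. There is therefore no ``paper's own proof'' to compare your proposal against.

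For what it is worth, your sketch is essentially the argument carried out in~\cite{DF11,DF12}: apply the leakage chain rule to lower-bound the residual average min-entropy of each share after $\leak$ bits of leakage, invoke the inner-product two-source extractor bound, and convert the strong-extractor guarantee into indistinguishability of encodings of $S$ versus $S'$ at the cost of a factor $\abs{\FF}$ in the error. The two subtleties you flag---the restriction to $\FZ^n$ and the conditioning on $\inp{L,R}=S$, and the adaptivity of the leakage queries---are exactly the ones that need care, and are handled in those references along the lines you indicate. One small slip: you write that ``$Z_L$ is a deterministic function of $L$ (and of the adaptive choices $\cA$ makes in response to $Z_R$)''; since queries to the two sides may interleave arbitrarily, the cleaner formulation is to fix the entire transcript and argue per-transcript, which is what you end up doing anyway.
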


Dziembowski and Faust \cite{DF12} have proposed a \emph{compiler}, which transforms arbitrary circuits over $\FF$
into functionally equivalent circuits secure
 against any continual leakage
assuming that:
\begin{enumerate}
\item the memory is divided into the parts, which leak independently,
\item the leakage from each memory part is bounded,
\item the circuit has an access to a leak-free component, which samples random orthogonal vectors.
\end{enumerate}
A pivotal point in the construction is the $\Refresh^n_\FF$
protocol, which refreshes the encoding of the secret.
It is run by two parties $P_L$ holding $L$ and $P_R$ holding $R$.
At the end of the protocol $P_L$ outputs $L'$ and $P_R$ outputs $R'$
such that $\inp{L,R} = \inp{L',R'}$ but except of this $(L',R')$ is uniform and independent
of $(L,R)$.

The only fact about $\Refresh^n_\FF$, which is used in the security proof presented in \cite{DF12}
is the existence of the \emph{reconstructor} procedure (an idea introduced earlier in\cite{FaustRRTV10}).
Informally, the reconstructor is a protocol that for inputs $(L,L')$
 held by $\PL$ and $(R,R')$ held by $\PR$ (where $L,L',R,R' \in \FZ^n$)
 such that $\inp{L,R} = \inp{L',R'}$ allows the parties to reconstruct
 the views that they would have in the $\Refresh_\FF^n(L,R)$ protocol,
 assuming that $(L',R')$ is an output of $\Refresh_\FF^n(L,R)$.

The $\Refresh^n_\FF$ protocol presented in \cite{DF12} performs $O(n^2)$ operations.
It is there used in a ,,generalized multiplication'' protocol as a sub-routine, what
leads at the end to $O(n^4)$ blow-up of the circuit's size while securing it against
leakages.
The protocol presented in this note needs $O(n)$ operations to refresh the secret,
what leads to $O(n^2)$ blow-up of the circuit's size.


\section{Leakage-Resilient Refreshing of LRS}\label{sec:refresh}
Similarly as in \cite{DF12} we assume
that the players have access to a leak-free component that samples
uniformly random pairs of orthogonal vectors. Technically, we will
assume that we have an oracle $\cpO$ that samples a uniformly random vector
$((A,\tilde{A}),(B,\tilde{B})) \in (\FF^n)^4$, subject to the constraint that the following three conditions hold:
\begin{enumerate}
\item \label{C1} $\inp{A,B} + \inp{\tilde{A},\tilde{B}}=0$,
\item \label{C2} $A_i \neq 0$ for $1 \le i \le n$,
\item \label{C3} $\tilde{B}_i \neq 0$ for $1 \le i \le n$.
\end{enumerate}
Note that although our oracle is slightly different from the oracle
$\cO$ used
in \cite{DF12}, it may be easily ,,simulated'' by the players having
access to $\cO$.

The refreshing scheme is presented in Figure~\ref{fig:refresh}.
The general idea behind the protocol is similar to one, which appeared
in \cite{DF12}.
Denote $\alpha :=
\inp{A,B} (= - \inp{\tilde{A},\tilde{B}})$.   The Steps \ref{step:R1}
and \ref{step:R2} are needed to refresh the share of $\PR$.  This is
done by generating, with the ``help'' of $(A,B)$ (coming from $\cpO$)
a vector $X$ such that
\begin{equation}
  \label{eq:inpLX}
\inp{L,X} = \alpha.
\end{equation}
The key difference between our approach and the protocol from \cite{DF12}
is a new and more efficient way of generating such $X$.
Eq.\ (\ref{eq:inpLX}) comes from a summation:
$\inp{L,X} = \sum_{i=1}^n L_i X_i = \sum_{i=1}^n L_i V_i B_i = \sum_{i=1}^n L_i L^{-1}_i A_i B_i =  \inp{A,B} = \alpha$.
Then, vector $X$ is added
to the share of $\PR$ by setting (in Step \ref{step:R2}) $R' := R +
X$.  Hence we get $\inp{L,R'} = \inp{L,R} + \inp{L,X} = \inp{L,R} +
\alpha$.  Symmetrically, in Steps \ref{step:L1} and \ref{step:L2} the
players refresh the share of $\PL$, by first generating $\tilde{X}$ such that
$\inp{\tilde{X},R'} = -\alpha$, and then setting $L' = L + \tilde{X}$.  By similar
reasoning as before, we get $\inp{L',R'} = \inp{L,R'}  - \alpha$,
which, in turn is equal to $\inp{L,R}$.  Hence,
$\inp{L,R} = \inp{L',R'}$.

\begin{center}
\begin{figure}
\fbox{ \footnotesize
 \begin{minipage}[c]{.95\linewidth}
   \begin{center} {\bf Protocol $(L',R') \draw \Refresh^n_\FF((L,R))$: } \end{center}
   \vspace{0.2cm}
\begin{minipage}[c]{.95\linewidth}
{\bf Input $(L,R)$: } $L \in \FZ^n$ is given to $\PL$ and $R \in \FZ^n$ is given to $\PR$.\\
 \vspace{-0.1cm}
\begin{enumerate}
      \item \label{step:0} Let $((A,\tilde{A}),(B,\tilde{B})) \samp \cpO$ and give $(A,\tilde{A})$ to $\PL$ and $(B,\tilde{B})$ to $\PR$.
   \begin{center} {\bf Refreshing the share of $\PR$:} \end{center}
      \item \label{step:R1} The player $\PL$ computes a vector $V$ such that $V_i := L_i^{-1} \cdot A_i$ for $1 \le i \le n$ and sends $V$ to $\PR$.
      \item \label{step:R2} The player $\PR$ computes a vector $X$ such that $X_i :=  V_i \cdot B_i$ for $1 \le i \le n$ and sets $R' := R + X$.
      \item \label{step:R3} If there exists $i$ such that $R'_i = 0$, then the protocol is restarted from the very beginning with the new vectors sampled from $\cpO$.
      \vspace{-0.1cm}
   \begin{center} {\bf Refreshing the share of $\PL$:} \end{center}
      \item \label{step:L1} The player $\PR$ computes a vector $\tilde{V}$ such that $\tilde{V}_i := {R'_i}^{-1} \cdot \tilde{B}_i$ for $1 \le i \le n$ and sends $\tilde{V}$ to $\PL$.
      \item \label{step:L2} The player $\PL$ computes a vector $\tilde{X}$ such that $\tilde{X}_i :=  \tilde{V}_i \cdot \tilde{A}_i$ for $1 \le i \le n$ and  sets $L' := L + \tilde{X}$.
      \item \label{step:L3} If there exists $i$ such that $L'_i = 0$, then the protocol is restarted from the very beginning with the new vectors sampled from $\cpO$.
\end{enumerate}
\vspace{-0.1cm}
{\bf Output:} The players output $(L',R')$.\\
{\bf Views:} The view $\viewL$ of player $\PL$ is $(L,A,V,\tilde{A},\tilde{V})$ and the view $\viewR$ of player $\PR$ is $(R,B,V,\tilde{B},\tilde{V})$.
\end{minipage}
\end{minipage}
}
\caption{Protocol $\Refresh^n_\FF$.  Oracle $\cpO$ samples random
  vectors $(A,\tilde{A},B,\tilde{B}) \in \FZ^n \times \FF^n \times \FF^n \times \FZ^n$ such that
  $\inp{A,B} = - \inp{\tilde{A},\tilde{B}}$.
  Note that the inverses in Steps \ref{step:R1} and \ref{step:L1}
  always exist, because $L, R \in \FZ^n$.
  Steps \ref{step:R3} and \ref{step:L3} guarantee that this condition
  is preserved under the execution of the protocol $\Refresh^n_\FF$.
  It can be easily proven that the protocol is restarted with
  a bounded probability regardless of $n$ (but keeping $|\FF| \ge 4n$), so it
  changes the efficiency of the algorithm only by a constant factor.
}\label{fig:refresh}
\end{figure}
\end{center}

 \section{Reconstructor for $\Refresh^n_\FF$}

 We now show a {\em reconstructor for the $\Refresh^n_\FF$
   protocol}.  Informally, the reconstructor is a protocol that for inputs $(L,L')$
 held by $\PL$ and $(R,R')$ held by $\PR$ (where $L,L',R,R' \in \FZ^n$)
 such that $\inp{L,R} = \inp{L',R'}$ allows the parties to reconstruct
 the views that they would have in the $\Refresh_\FF^n(L,R)$ protocol,
 assuming that $(L',R')$ is an output of $\Refresh_\FF^n(L,R)$.  The
 key feature of this reconstructor is that it does not require any
 interaction between the players.  The only ``common randomness'' that
 the players need can be sampled offline before the protocol starts.
 These properties are used in a security proof presented in \cite{DF12}.

 \begin{center}
 \begin{figure}
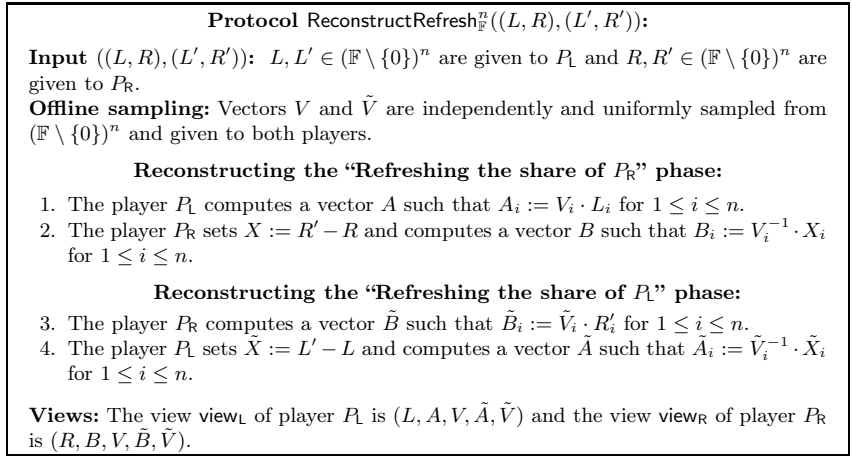

 \scalebox{0.87}{
 \fbox{
 { \footnotesize
 \begin{minipage}[c]{1.0\linewidth}
    \begin{center}
      {\bf Protocol $\ReconstructRefresh^n_\FF((L,R),(L',R'))$: }
    \end{center}
    \vspace{0.2cm}
 \begin{minipage}[c]{1\linewidth}
 {\bf Input $((L,R),(L',R'))$: } $L,L' \in \FZ^n$ are given to $\PL$ and $R, R' \in \FZ^n$ are given to $\PR$.\\
 {\bf Offline sampling:} Vectors $V$ and $\tilde{V}$ are independently and uniformly sampled from $\FZ^n$ and given to both
  players.
  \vspace{-0.1cm}
  \begin{center} {\bf Reconstructing the ``Refreshing the share of $\PR$'' phase:} \end{center}
 \begin{enumerate}
   \vspace{-0.5cm}
    \item The player \label{step:R1Rec} $\PL$ computes a vector $A$ such that $A_i := V_i \cdot L_i$ for $1 \le i \le n$.
    \item The player \label{step:R2Rec} $\PR$ sets $X := R' - R$ and computes a vector $B$ such that $B_i := V^{-1}_i \cdot X_i$ for $1 \le i \le n$.
     \vspace{-0.1cm}
       \begin{center} {\bf Reconstructing the ``Refreshing the share of $\PL$'' phase:} \end{center}
  \vspace{-0.2cm}
  \item The player \label{step:L1Rec} $\PR$ computes a vector $\tilde{B}$ such that $\tilde{B}_i := \tilde{V}_i \cdot R'_i$ for $1 \le i \le n$.
  \item The player \label{step:L2Rec} $\PL$ sets $\tilde{X} := L' - L$ and computes a vector $\tilde{A}$ such that $\tilde{A}_i := \tilde{V}^{-1}_i \cdot \tilde{X}_i$ for $1 \le i \le n$.
 \end{enumerate}
 \vspace{-0.1cm}
 {\bf Views:} The view $\viewL$ of player $\PL$ is $(L,A,V,\tilde{A},\tilde{V})$ and the view $\viewR$ of player $\PR$ is $(R,B,V,\tilde{B},\tilde{V})$.
 \end{minipage}
 \vspace{-0.1cm}
 \end{minipage}
 }
 }}
 \caption{Protocol $\ReconstructRefresh^n_\FF$}
   \label{fig:reconstructrefresh}
 \end{figure}
 \end{center}

 We now formalize what it means that $\ReconstructRefresh_\FF^n$ is a
 reconstructor for $\Refresh_\FF^n$.  This is done by considering two
 experiments depicted on Fig.\ \ref{fig:exp}.  The next lemma shows
 that these experiments produce the same distributions.

 \begin{figure}
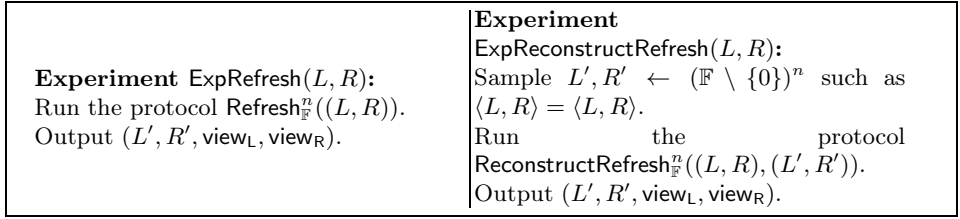

   \centering
 \fbox{
 { \footnotesize
 \begin{minipage}[c]{1.0\linewidth}
   \begin{tabular}[c]{l|l}
    \begin{minipage}[c]{.47\linewidth}
     {\bf Experiment $\ExpRefresh(L,R)$:}\\
     Run the protocol $\Refresh_\FF^n((L,R))$.\\
  Output $(L',R',\viewL,\viewR)$.
    \end{minipage}
 &
    \begin{minipage}[c]{.47\linewidth}
    {\bf Experiment $\ExpReconstructRefresh(L,R)$:}\\
    Sample $L',R' \draw \FZ^n$ such as $\inp{L,R}=\inp{L,R}$.\\
    Run the protocol $\ReconstructRefresh^n_\FF((L,R),(L',R'))$.\\
   Output $(L',R',\viewL,\viewR)$.
    \end{minipage}
  \end{tabular}
 \end{minipage}}}
  \caption{Experiments $\ExpRefresh(L,R)$ and $\ExpReconstructRefresh(L,R)$.}
   \label{fig:exp}
 \end{figure}

 \begin{lemma}\label{lemma:reconref}
   For every $L,R \in \FZ^n$ we have that
 \[
 \ExpRefresh(L,R) \eqd \ExpReconstructRefresh(L,R).
 \]
 \end{lemma}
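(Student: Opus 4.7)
My plan is to prove the two experiments coincide by exhibiting an explicit bijection between their underlying randomness and using it to transport the uniform measure on one side to the uniform measure on the other. I begin by noticing that in both experiments the views $\viewL = (L, A, V, \tilde A, \tilde V)$ and $\viewR = (R, B, V, \tilde B, \tilde V)$ are fully determined from the tuple $(L, R, L', R', V, \tilde V)$ by the same four coordinatewise identities: $A_i = V_i L_i$, $B_i = V_i^{-1}(R'_i - R_i)$, $\tilde B_i = \tilde V_i R'_i$, and $\tilde A_i = \tilde V_i^{-1}(L'_i - L_i)$. In $\Refresh^n_\FF$ these are read left-to-right (starting from the $(A,B,\tilde A,\tilde B)$ produced by $\cpO$), while in $\ReconstructRefresh^n_\FF$ they are read right-to-left (starting from $(L',R',V,\tilde V)$). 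Therefore it is enough to prove that the joint law of $(L',R',V,\tilde V)$ induced by the two experiments is the same.

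The reconstructor experiment produces, by construction, the uniform distribution on
\[
T := \bigl\{(L',R',V,\tilde V) \in (\FZ^n)^4 : \inp{L',R'} = \inp{L,R}\bigr\}.
\]
For $\ExpRefresh$, the oracle $\cpO$ samples uniformly from the set $S_0$ of tuples $(A,\tilde A,B,\tilde B)$ obeying its three defining conditions, and the restart loops in Steps \ref{step:R3} and \ref{step:L3} precisely condition on the additional requirement that both $R + X$ and $L + \tilde X$ lie in $\FZ^n$. Thus the refreshing experiment yields the uniform law on the corresponding subset $S \subseteq S_0$.

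The heart of the argument is then to check that the protocol's forward map $\phi: S \to T$, sending $(A,\tilde A,B,\tilde B)$ to $(L',R',V,\tilde V)$ as computed by $\Refresh^n_\FF$, is a bijection whose inverse $\psi: T \to S$ is given precisely by the four identities above. Well-definedness of $\phi$ and $\psi$ as set maps is transparent from the nonvanishing of $L_i, R'_i, V_i, \tilde V_i$. The substantive point to verify is that $\phi$ and $\psi$ respect the global constraints cutting out $S$ and $T$: for $\phi$, that $\inp{L',R'} = \inp{L,R}$ and $L',R' \in \FZ^n$; for $\psi$, that $A,\tilde B \in \FZ^n$ and $\inp{A,B}+\inp{\tilde A,\tilde B}=0$. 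The nondegeneracy checks are immediate and local. The inner-product identity for $\phi$ is essentially the $\alpha$-calculation already sketched before the statement; for $\psi$ it drops out of the telescoping $\inp{A,B}+\inp{\tilde A,\tilde B} = (\inp{L,R'}-\inp{L,R}) + (\inp{L',R'}-\inp{L,R'}) = \inp{L',R'}-\inp{L,R} = 0$. I expect this correspondence between the two global constraints to be the only nonroutine step; once it is in place, $\phi$ pushes the uniform measure on $S$ forward to the uniform measure on $T$, and the distributional equality of the two experiments follows by the observation in the first paragraph.
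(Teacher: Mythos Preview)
Your argument is correct but proceeds differently from the paper's. The paper uses a chain-rule decomposition: it orders the variables of the first phase as $(L,R,A,B,V,X,R')$ and verifies, one variable at a time, that each has the same conditional law given its predecessors in the two experiments (asserting the second phase is analogous). You instead work globally: you identify the randomness space of $\ExpRefresh$ with the finite set $S$ (the oracle's domain further cut down by the restart conditions of Steps~\ref{step:R3} and~\ref{step:L3}) and that of $\ExpReconstructRefresh$ with $T$, observe that both carry the uniform law, and exhibit the protocol map $\phi$ and the reconstructor map $\psi$ as mutual inverses $S\leftrightarrow T$. The one nontrivial step---that the oracle constraint $\inp{A,B}+\inp{\tilde A,\tilde B}=0$ and the reconstructor constraint $\inp{L',R'}=\inp{L,R}$ correspond under the bijection---is exactly your telescoping computation. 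Your route handles the restart conditioning cleanly as part of the definition of $S$ and treats both phases at once; the paper's variable-by-variable check is more mechanical and does not require naming the global bijection.
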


\begin{proof}
 We only show that the equality of distributions holds for the
 variables involved in the ``Refreshing of the share of $\PR$ phase''
 (the same fact for the other phase is proven analogously).  These
 variables are
\[
L,R,A,B,V,X,R'.
\]

We prove it showing that each of the above variables has an identical conditional
distribution given the previous variables in the series:
\begin{enumerate}
 \item{$\mathbf{L,R}$:} Clearly in both experiments $(L,R)$ is constant and identical;
 \item{$\mathbf{A}$:} $A$ is uniformly distributed over $\FZ^n$ independently of $(L,R)$.
 In the first experiment it comes from the way it is sampled from $\cpO$.
 In the second scenario it is defined by the equation $A_i :=  V_i \cdot L_i$ for $1 \le i \le n$.
Hence, each $A_i$ is a product of $V_i$ distributed uniformly over $\FZ$ and some fixed non-zero $L_i$.
Therefore $A_i$ has a uniform distribution over $\FZ$.
 
  \item{$\mathbf{B}$:} $B$ is uniformly distributed over $\FF^n$ independently of $(L,R,A)$.
 In the first experiment it comes from the way it is sampled from $\cpO$.
 In the second scenario it is defined by the equation $B_i := V^{-1}_i \cdot X_i$ for $1 \le i \le n$.
 Notice that $R'$ has a uniform distribution over $\FF^n$ independent of $(L,R,A)$,
 so $X$ defined by $X := R' - R$ is also uniform over $\FF$.
 Hence, each $B_i$ is a product of some non-zero $V^{-1}_i$
 and $X_i$ distributed uniformly over $\FF$ and independently of $V$.
 Therefore $B_i$ has a uniform distribution over $\FF$.
 \item{$\mathbf{V}$:} $V$ is uniquely determined given $(L,R,A,B)$ by the equation $V_i := L_i^{-1} \cdot A_i$ for $1 \le i \le n$
 (Step \ref{step:R1} in Fig. \ref{fig:refresh} and Step \ref{step:R1Rec} in Fig. \ref{fig:reconstructrefresh}).
 \item{$\mathbf{X}$:} $X$ is uniquely determined given $(L,R,A,B,V)$ by the equation $X_i = V_i \cdot B_i$ for $1 \le i \le n$
 (Step \ref{step:R2} in Fig. \ref{fig:refresh} and Step \ref{step:R2Rec} in Fig. \ref{fig:reconstructrefresh}).
 \item{$\mathbf{R'}$:} $R'$ is in both experiments equal to $L + X$.
\end{enumerate}
\qed
\end{proof}

\subsection*{Acknowledgments}

The author wishes to thank his supervisor Stefan Dziembowski for the guidance in carring out the research and writing this note. 

\small{
\def\shortbib{0}
\bibliography{LRPKC}
\bibliographystyle{alpha}}

\end{document}